\icmltitlerunning{Computational Rationalization: The Inverse Equilibrium Problem}
\newtheorem{definition}{Definition}
\newtheorem{lemma}{Lemma}
\newtheorem{property}{Property}
\newtheorem{corollary}{Corollary}
\newtheorem{theorem}{Theorem}
\newcommand{\defword}[1]{\textbf{\boldmath{#1}}}
\newcommand{\simplex}[1]{\Delta_{#1}}
\newcommand{\A}{\mathcal A}
\newcommand{\func}[3]{#1 : #2 \mapsto #3}
\newcommand{\R}{\mathbb R}
\newcommand{\ami}{a_{-i}}
\newcommand{\Ami}{\mathcal{A}_{-i}}
\newcommand{\norm}[1]{\left|\left|#1\right|\right|}
\newcommand{\argmax}{\operatornamewithlimits{argmax}}
\newcommand{\regret}[4]{\mathrm{regret}_{#1}(#2|#3,#4)}
\newcommand{\switch}[4]{\mathrm{switch}_{#1}^{#2\rightarrow #3}(#4)}
\newcommand{\inti}{\Phi_i^{\mathrm{int}}}
\newcommand{\modint}{\Phi^{\mathrm{int}}}
\newcommand{\swapi}{\Phi_i^{\mathrm{swap}}}
\newcommand{\swap}{\Phi^{\mathrm{swap}}}
\newcommand{\Regret}[3]{R^{#1}(#2,#3)}
\newcommand{\Rmat}[2]{R_{#1}^{#2}}
\newcommand{\baRmat}[2]{\bar{R}_{#1}^{#2}}
\newcommand{\st}{\mbox{~s.t.~}}
\newcommand{\trans}[1]{{#1}^\mathrm{T}}
\newcommand{\lambdaT}{\trans\lambda}
\newcommand{\truesigma}{\sigma}
\newcommand{\demonsigma}{\tilde\sigma}
\newcommand{\predsigma}{\hat\sigma}
\newcommand{\predw}{\hat{w}}
\newcommand{\rvec}[3]{r^{#2}_{#1,#3}}
\newcommand{\barvec}[3]{\bar{r}^{#2}_{#1,#3}}
\begin{document}

\twocolumn[
\icmltitle{Computational Rationalization: The Inverse Equilibrium Problem}

\icmlauthor{Kevin Waugh}{waugh@cs.cmu.edu}
\icmlauthor{Brian D. Ziebart}{bziebart@cs.cmu.edu}
\icmlauthor{J. Andrew Bagnell}{dbagnell@ri.cmu.edu}
\icmladdress{Carnegie Mellon University, 5000 Forbes Ave, Pittsburgh, PA, USA 15213}

\icmlkeywords{game theory, multi-agent, imitation learning, control}

\vskip 0.3in
]

\newcommand{\todo}[1]{}

\begin{abstract}
Modeling the purposeful behavior of imperfect agents from a small number of observations is a challenging task.  When restricted to the single-agent decision-theoretic setting, inverse optimal control techniques assume that observed behavior is an approximately optimal solution to an unknown decision problem.  These techniques learn a utility function that explains the example behavior and can then be used to accurately predict or imitate future behavior in similar observed or unobserved situations.

In this work, we consider similar tasks in competitive and cooperative multi-agent domains.  Here, unlike single-agent settings, a player cannot myopically maximize its reward --- it must speculate on how the other agents may act to influence the game's outcome.  Employing the game-theoretic notion of regret and the principle of maximum entropy, we introduce a technique for predicting and generalizing behavior, as well as recovering a reward function in these domains.
\end{abstract}

\section{Introduction}



Predicting the actions of others in complex and strategic settings is an important facet of intelligence that guides our interactions---from walking in crowds to negotiating multi-party deals.  Recovering such behavior from merely a few observations is an important and challenging machine learning task.


While mature computational frameworks for decision-making have been developed to \textbf{prescribe} the behavior that an agent {\em should} perform, such frameworks are often ill-suited for \textbf{predicting} the behavior that an agent {\em will} perform.  Foremost, the standard assumption of decision-making frameworks that a criteria for preferring actions (\emph{e.g.}, costs, motivations and goals) is known \textit{a priori} often does not hold.
Moreover, real behavior is typically not consistently optimal or completely rational; it may be influenced by factors that are difficult to model or subject to various types of error when executed. Meanwhile, the standard tools of statistical machine learning (\emph{e.g.}, classification and regression) may be equally poorly matched to modeling purposeful behavior; an agent's goals often succinctly, but implicitly, encode a strategy that would require tremendous amounts of data to learn.

A natural approach to mitigate the complexity of recovering a full strategy for an agent is to consider identifying a compactly expressed utility function that \emph {rationalizes} observed behavior: that is, identify rewards for which the demonstrated behavior is optimal and then leverage these rewards for future prediction. Unfortunately, the problem is fundamentally ill-posed: in general, many reward functions can make behavior seem rational, and in fact, the trivial, everywhere $0$ reward function makes \textbf{all} behavior appear rational \cite{ng2000algorithms}. Further, after removing such trivial reward functions, there may be \textbf{no} reward function for which the demonstrated behavior is optimal as agents may be imperfect and the real world they operate in may be only approximately represented.

In the single-agent decision-theoretic setting, inverse optimal control methods have been used to bridge this gap between the prescriptive frameworks and predictive applications~\cite{abbeel2004,ratliff2006,ziebart2008,ziebart2010}. Successful applications include learning and prediction tasks in personalized vehicle route planning~\cite{ziebart2008}, robotic crowd navigation~\cite{henry2010}, quadruped foot placement and grasp selection~\cite{ratliff2009}.  A reward function is learned by these techniques that both explains demonstrated behavior and approximates the optimality criteria of prescriptive decision-theoretic frameworks.

As these methods only capture a single reward function and do not reason about competitive or cooperative motives, inverse optimal control proves inadequate for modeling the strategic interactions of multiple agents.  In this paper, we consider the game-theoretic concept of regret as a necessary stand-in for the optimality criteria of the single-agent work.  As with the inverse optimal control problem, the result is fundamentally ill-posed. We address this by requiring that for any utility function linear in known features, our learned model must have no more regret than that of the observed behavior.  We demonstrate that this requirement can be re-cast as a set of equivalent convex constraints that we denote the \emph{inverse correlated equilibrium } (ICE) polytope.

As we are interested in the effective prediction of behavior, we will use a maximum entropy criteria to select behavior from this polytope.  We demonstrate that optimizing this criteria leads to mini-max optimal prediction of behavior subject to approximate rationality.  We consider the dual of this problem and note that it generalizes the traditional log-linear maximum entropy family of problems~\cite{della2002inducing}.  We provide a simple and computationally efficient gradient-based optimization strategy for this family and show that only a small number of observations are required for accurate prediction and transfer of behavior.  We conclude by considering a matrix routing game and compare the ICE approach to a variety of natural alternatives.




Before we formalize imitation learning in matrix games, motivate our assumptions and describe and analyze our approach, we will review the game-theoretic notions of regret and the correlated equilibrium.

\section{Game Theory Background}
Matrix games are the canonical tool of game theorists for representing
strategic interactions
ranging from illustrative toy problems, such as the ``Prisoner's Dilemma" and
the ``Battle of the Sexes" games, to important negotiations, collaborations,
and auctions.
%
In this work, we employ a class of games with payoffs or utilities that are linear functions of features defined over the outcome space.
\begin{definition}
A \defword{linearly parameterized normal-form game}, or \defword{matrix game}, $\Gamma = (N,\A,F)$, is composed of: a finite set of \defword{players}, $N$; a set of \defword{joint-actions} or \defword{outcomes}, $\A = \times_{i\in N} A_i$, consisting of a finite set of \defword{actions} for each player, $A_i$; a set of \defword{outcome features}, $F = \{\theta^i_a\in\R^K\}$ for each outcome that induce a \defword{parameterized utility function}, $u_i(a|w) = \trans{\theta^i_a}w$ -- the reward for player $i$ achieving outcome $a$ w.r.t.\ \defword{utility weights} $w$.
 \end{definition}
For notational convenience, we let $\ami$ denote the vector $a$ excluding
component $i$ and let $\Ami=\times_{j\ne i,j\in N}A_i$ be the set of such
vectors.

In contrast to standard normal-form games where the utility functions
for game outcomes are known, in this work we assume that
``true" utility weights, $w^*$, which govern observed behavior,
are unknown.
This allows us to model real-world scenarios where a cardinal utility is not available or is subject to personal taste.




We model the players with a distribution $\sigma\in\simplex\A$ over the game's joint-actions.  Coordination between players can exist, thus, this distribution need not factor into independent strategies for each player.  Conceptually, a signaling mechanism, such as a traffic light, can be thought to sample a joint-action from $\sigma$ and communicate to each player $a_i$, its portion of the joint-action.  Each player can then consider deviating from $a_i$ using a \defword{modification function}, $\func{f_i}{A_i}{A_i}$~\cite{agt4}.

%
The \defword{switch modification function}, for instance,
\begin{align}
\switch{i}{x}{y}{a_i} = \left\{\begin{array}{cl}%
y & \mbox{if $a_i = x$} \\
a_i & \mbox{otherwise}
\end{array}\right.
\end{align}
substitutes action $y$ for recommendation $x$.

\defword{Instantaneous regret} measures how much a player would benefit from a particular modification function when the coordination device draws joint-action $a$,
\begin{align}
\regret{i}{a}{f_i}{w} & = u_i(f_i(a_i),\ami|w) - u_i(a|w) \\
& = \trans{\left[\theta^{i}_{f_i(a_i),\ami} - \theta^{i}_{a_i,\ami}\right]}w \\
& = \rvec{i}{f_i\mathrm{T}}{a}w.
\end{align}

Players 
do not have knowledge of the complete joint-action; 
thus, each must reason about the \defword{expected regret} with respect to a modification function,
\begin{align}
\trans{\sigma}\Rmat{i}{f_i}w & = \mathbb E_{a\sim\sigma}\left[\regret{i}{a}{f_i}{w}\right]\\
& = \sum_{a\in\A}\sigma_{a}\rvec{i}{f_i\mathrm{T}}{a}w.
\end{align}


It is helpful to consider regret with respect to a class of modification functions.  Two classes are particularly important for our discussion.  First, \defword{internal regret} corresponds to the set of modification functions where a single action is replaced by a new action, $\inti = \{\switch{i}{x}{y}{\cdot} : \forall x,y\in A_i\}$.  Second, \defword{swap regret} corresponds to the set of all modification functions, $\swapi = \{ f_i \}$. 
We denote $\Phi=\cup_{i\in N}\;\Phi_i$.

%
%


The \defword{expected regret with respect to $\Phi$} and outcome distribution
$\sigma$,
\begin{align}
\Regret{\Phi}{\sigma}{w} = \max_{f_i\in\Phi}\mathbb E_{a\sim\sigma}\left[\regret{i}{a}{f_i}{w}\right],
\end{align}
is important for understanding the incentive to deviate from, and hence the stability of, the specified behavior. 
The most general modification class, $\swap$, leads to the notion of \defword{$\varepsilon$-correlated equilibrium} \cite{osborne1994course}, in which $\sigma$ satisfies $\Regret{\swap}{\sigma}{w^*} \le \varepsilon$.
Thus, regret can be thought of as a substitute for utility when assessing
the optimality of behavior in multi-agent settings.

\section{Imitation Learning in Matrix Games}
We are now equipped with the tools necessary to introduce our approach for imitation learning in multi-agent settings.
As input, we observe a sequence of outcomes, $\{a_m\}_{m=1}^M$, sampled from $\truesigma$, the \defword{true behavior}.  We denote the empirical distribution of this sequence, $\demonsigma$, the \defword{demonstrated behavior}.  We aim to learn
a \defword{predictive behavior} distribution, $\predsigma$ from these demonstrations.  Moreover, we would like our learning procedure to extract the motives and intent for the behavior so that we may imitate the players in similarly structured, but unobserved games.

Imitation appears hard barring further assumptions.  In particular, if the agents are unmotivated or their intentions are not coerced by the observed game, there is little hope of recovering principled behavior in a new game.  Thus, we require some form of rationality.

\subsection{Rationality Assumptions}

We say that agents are \emph{rational} under their true preferences when they are indifferent between
$\predsigma$ 
and their true behavior if and only if $\Regret{\Phi}{\predsigma}{w^*} \le \Regret{\Phi}{\truesigma}{w^*}$.

As agents' true preferences $w^*$ are unknown to the observer, we must consider an encompassing assumption that requires any behavior that we estimate to satisfy this property for all possible utility weights, or
\begin{align}
\forall w\in\R^K, \; \Regret{\Phi}{\predsigma}{w} \le \Regret{\Phi}{\truesigma}{w}.
\end{align}
Any behavior achieving this restriction, \emph{strong rationality}, is also rational, and, by virtue of the contrapositive, we see that
unless we have additional information regarding the agents' true preferences, we must assume this strong assumption or we risk violating rationality.
\begin{lemma}
If strong rationality does not hold for alternative behavior $\predsigma$ then there exist agent utilities such that they would prefer $\truesigma$ to $\predsigma$.
\end{lemma}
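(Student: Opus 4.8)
The plan is to prove the contrapositive directly from the two definitions in play, observing that the utility weights witnessing the failure of strong rationality are themselves the desired agent utilities.

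First I would unpack the hypothesis. Strong rationality is the assertion that $\Regret{\Phi}{\predsigma}{w}\le\Regret{\Phi}{\truesigma}{w}$ holds for \emph{every} $w\in\R^K$. Hence if strong rationality fails for the alternative behavior $\predsigma$, there must exist some weight vector $w'\in\R^K$ at which the inequality is violated, i.e.\ $\Regret{\Phi}{\predsigma}{w'} > \Regret{\Phi}{\truesigma}{w'}$. Note that $w'$ is a perfectly legitimate choice of utility weights, since utility weights are allowed to range over all of $\R^K$, so no further construction is needed to produce the ``agent utilities'' referenced in the statement.

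Next I would instantiate the agents' (a priori unknown) true preferences at this witness, taking $w^{*} = w'$. Under these preferences we then have $\Regret{\Phi}{\predsigma}{w^{*}} > \Regret{\Phi}{\truesigma}{w^{*}}$, so the condition $\Regret{\Phi}{\predsigma}{w^{*}}\le\Regret{\Phi}{\truesigma}{w^{*}}$ that characterizes indifference fails. By the rationality assumption --- rational agents are indifferent between $\predsigma$ and their true behavior $\truesigma$ if and only if that regret condition holds --- the agents are not indifferent between the two behaviors; and since $\truesigma$ attains strictly smaller $\Phi$-regret than $\predsigma$ under $w^{*}$, the only consistent conclusion is that they strictly prefer $\truesigma$ to $\predsigma$, as claimed.

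The only real subtlety, and it is a minor one, is the final inference: formally concluding from ``not indifferent'' together with a strict regret gap in favor of $\truesigma$ that the agents prefer $\truesigma$ (rather than $\predsigma$). This rests on the interpretation, established in the background section, of $\Phi$-regret as the quantity measuring the instability of --- and hence the incentive to deviate from --- a prescribed behavior, so that smaller regret is weakly preferred; combined with non-indifference this yields the strict preference. One might additionally remark that, because $\Regret{\Phi}{\cdot}{w}$ is a maximum of functions linear in $w$ and so positively homogeneous, the witnessing gap can be rescaled to be arbitrarily large, emphasizing that the violation is not an artifact of normalization; but this observation is not required for the existence statement itself.
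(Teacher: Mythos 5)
Your proof is correct and follows exactly the route the paper intends: the lemma is just the contrapositive of the strong rationality condition, with the witnessing weight vector $w'$ taken as the agents' true preferences $w^*$, under which the regret comparison fails and hence $\truesigma$ is preferred. The paper offers no more detailed argument than this (it presents the lemma as immediate ``by virtue of the contrapositive''), so your write-up, including the brief justification that lower $\Phi$-regret under $w^*$ yields the strict preference, matches and indeed slightly elaborates the paper's reasoning.
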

By restricting our attention to behavior that satisfies strong rationality, at worst, agents acting according to unknown true preference $w^*$ will be indifferent between our predictive distribution and their true behavior.


\subsection{Inverse Correlated Equilibria}

Unfortunately, a direct translation of the strong rationality requirement into constraints on the distribution $\predsigma$ leads to a non-convex optimization problem as it involves products of varying utility vectors and the behavior to be estimated. Fortunately, however, we can provide an equivalent concise convex description of the constraints on
$\predsigma$ that ensures any feasible distribution satisfies strong rationality. We denote this set of equivalent constraints as the \emph{Inverse Correlated Equilibria} (ICE) polytope:
\begin{definition} [ICE Polytope] \label{def:ice}
\begin{align}
  \trans{\predsigma}{\Rmat{i}{f_i}} = &\sum_{f_j\in\Phi}\eta_{f_j}^{f_i}\trans{\demonsigma}{\Rmat{j}{f_j}}, \forall f_i\in\Phi \\
  \notag & \; \eta^{f_i}\in\simplex\Phi, \forall f_i\in\Phi;
\qquad \predsigma\in\simplex\A.
\end{align}
\end{definition}

\begin{theorem}
A distribution, $\predsigma$, satisfies the constraints above for some $\eta$ if and only if it satisfies strong rationality.  That is, $\forall w\in\R^K, \; \Regret{\Phi}{\predsigma}{w} \le \Regret{\Phi}{\demonsigma}{w}$ if and only if $\forall f_i\in\Phi, \exists \eta^{f_i}\in\simplex\Phi$ such that $\trans{\predsigma}{\Rmat{i}{f_i}} = \sum_{f_j\in\Phi}\eta_{f_j}^{f_i}\trans{\demonsigma}\Rmat{j}{f_j}$.
\label{thm:iff}
\end{theorem}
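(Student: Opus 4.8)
The plan is to prove the two directions of the equivalence separately. The implication ``ICE $\Rightarrow$ strong rationality'' is a short convexity argument; the converse ``strong rationality $\Rightarrow$ ICE'' is the substantive half and rests on a separating-hyperplane argument.

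\textbf{Soundness (ICE $\Rightarrow$ strong rationality).} Suppose $\predsigma$ together with a family $\{\eta^{f_i}\}_{f_i\in\Phi}$ satisfies the constraints of Definition~\ref{def:ice}. Fix an arbitrary $w\in\R^K$ and pick $f_i^\star\in\Phi$ attaining the maximum in $\Regret{\Phi}{\predsigma}{w}$ (the max is over the finite set $\Phi$, so it is attained). Right-multiplying the ICE equality for $f_i^\star$ by $w$ gives
\begin{align*}
\Regret{\Phi}{\predsigma}{w} = \trans{\predsigma}\Rmat{i}{f_i^\star}w = \sum_{f_j\in\Phi}\eta_{f_j}^{f_i^\star}\,\trans{\demonsigma}\Rmat{j}{f_j}w ,
\end{align*}
which, since $\eta^{f_i^\star}\in\simplex\Phi$, is a convex combination of the scalars $\{\trans{\demonsigma}\Rmat{j}{f_j}w : f_j\in\Phi\}$ and is therefore at most $\max_{f_j\in\Phi}\trans{\demonsigma}\Rmat{j}{f_j}w = \Regret{\Phi}{\demonsigma}{w}$. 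Since $w$ was arbitrary, strong rationality follows.

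\textbf{Completeness (strong rationality $\Rightarrow$ ICE).} Because $\Phi$ is finite, $C := \mathrm{conv}\{\trans{\demonsigma}\Rmat{j}{f_j} : f_j\in\Phi\}\subseteq\R^K$ is compact and convex, and its support function is $h_C(w) = \max_{f_j\in\Phi}\trans{\demonsigma}\Rmat{j}{f_j}w = \Regret{\Phi}{\demonsigma}{w}$. The key observation is that the $f_i$-th ICE equality says precisely that the vector $\trans{\predsigma}\Rmat{i}{f_i}$ lies in $C$. So fix $f_i\in\Phi$ and suppose, toward a contradiction, that $p := \trans{\predsigma}\Rmat{i}{f_i}\notin C$; by strict separation of the point $p$ from the compact convex set $C$ there is $w\in\R^K$ with $\trans{p}w > h_C(w) = \Regret{\Phi}{\demonsigma}{w}$. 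But $\Regret{\Phi}{\predsigma}{w}\ge\trans{\predsigma}\Rmat{i}{f_i}w = \trans{p}w$, so $\Regret{\Phi}{\predsigma}{w} > \Regret{\Phi}{\demonsigma}{w}$, contradicting strong rationality at this $w$. Hence $p\in C$, and by definition of the convex hull this yields coefficients $\eta^{f_i}\in\simplex\Phi$ with $\trans{\predsigma}\Rmat{i}{f_i} = \sum_{f_j\in\Phi}\eta_{f_j}^{f_i}\trans{\demonsigma}\Rmat{j}{f_j}$. Ranging over all $f_i\in\Phi$ produces the full family $\eta$, and $\predsigma\in\simplex\A$ holds since $\predsigma$ is a distribution, so every ICE constraint is satisfied.

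The main obstacle — the part worth spelling out carefully — is the completeness direction: recognizing that each per-$f_i$ constraint is exactly a convex-hull membership statement, that $\Regret{\Phi}{\demonsigma}{\cdot}$ is exactly the support function of that hull, and that the universally quantified $w$ in strong rationality is precisely what forbids a separating direction (the $w$ that would witness non-membership is the $w$ that would witness a violation of rationality). The soundness direction and the finiteness/compactness bookkeeping are routine.
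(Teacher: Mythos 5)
Your proof is correct, and its skeleton is the same as the paper's: both arguments reduce strong rationality to the per-modification-function statement that, for each $f_i\in\Phi$, the vector $\trans{\predsigma}\Rmat{i}{f_i}$ satisfies $\trans{\predsigma}\Rmat{i}{f_i}w \le \max_{f_j\in\Phi}\trans{\demonsigma}\Rmat{j}{f_j}w$ for all $w$, and both identify that condition with membership of $\trans{\predsigma}\Rmat{i}{f_i}$ in the convex hull of $\{\trans{\demonsigma}\Rmat{j}{f_j}\}$. Where you differ is in how the hard direction of that identification is certified: the paper's Lemma~\ref{lemma:tech2} writes the universally quantified inequality as a small linear program ($\max_{w,t}\trans{b}w - t$ s.t.\ $t\ge\trans{a_i}w$) whose dual is exactly the feasibility problem $b=\lambdaT A$, $\lambda\in\simplex{A}$, and invokes strong LP duality; you instead invoke the strict separating-hyperplane theorem for a point and the compact convex hull $C$, observing that $\Regret{\Phi}{\demonsigma}{\cdot}$ is the support function of $C$ so a separating direction would directly witness a violation of strong rationality. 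The two tools are interchangeable faces of convex duality, so neither argument is stronger; yours is somewhat more geometric and self-contained (no auxiliary LP, and the contradiction is phrased directly in terms of regret), while the paper's LP formulation makes the dual certificate $\eta^{f_i}$ explicit as a dual feasible point and fits the duality machinery it reuses later for the MaxEnt program. Your handling of the easy direction (convex-combination bound after picking the maximizing $f_i^\star$) matches the content of the paper's Lemma~\ref{lemma:tech1}; splitting the proof into two implications rather than a chain of equivalences is purely stylistic.
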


The proof of Theorem~\ref{thm:iff}
is provided in the Appendix~\cite{waugh11arXiv}.

We note that this polytope, perhaps unsurprisingly, is similar to the polytope of correlated equilibrium itself, but here is defined in terms of the behavior we observe instead of the (unknown) reward function. Given any observed behavior $\truesigma$, the constraints are feasible as the demonstrated behavior satisfies them; our goal is to choose from these behaviors without estimating a full joint-action distribution.
While the ICE polytope establishes a basic requirement for estimating rational behavior, there are generally infinitely many distributions consistent with its constraints.

\subsection{Principle of Maximum Entropy}

As we are interested in the problem of statistical prediction of strategic behavior, we must find a mechanism to resolve the ambiguity remaining after accounting for the rationality constraints. The \defword{principle of maximum entropy} provides a principled method for choosing such a distribution.  This choice leads to not only statistical guarantees on the resulting predictions, but to efficient optimization.


The Shannon \defword{entropy} of a distribution $\predsigma$ is defined as
$H(\predsigma) = -\sum_{x\in\mathcal X}\predsigma_x\log\predsigma_x$.
The \defword{principle of maximum entropy} advocates choosing the distribution with maximum entropy subject to known (linear) constraints~\cite{jaynes1957}:
\begin{align}
\truesigma_{\text{MaxEnt}} & = \argmax_{\predsigma\in\simplex{\mathcal{X}}} H(\predsigma), \quad\mbox{subject to:} \\
& \; g(\predsigma) = 0 \text{ and } h(\predsigma) \leq 0.\notag
\end{align}
The resulting log-linear family of distributions ({\em e.g.}, logistic regression, Markov random fields, conditional random fields) are widely used within statistical machine learning.
For our problem, the constraints are precisely that the distribution is in the ICE polytope, ensuring that whatever distribution is learned has no more regret than the demonstrated behavior.


Importantly, the maximum entropy distribution subject to our constraints enjoys the following guarantee:
\begin{lemma} \label{lem:maxent}
The maximum entropy ICE distribution minimizes over all strongly rational distributions the worst-case log-loss , $-\sum_{a \in \A} \truesigma_a \log \predsigma_a$, when $\truesigma$ is chosen adversarially and subject to strong rationality.
\end{lemma}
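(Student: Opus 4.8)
The plan is to recognize the statement as an instance of the classical equivalence between maximum entropy and minimax log-loss (robust Bayes), specialized to the convex constraint set of strongly rational distributions. By Theorem~\ref{thm:iff}, this set equals the ICE polytope $\mathcal{C}$, which is nonempty (it contains $\demonsigma$), convex, and compact as the intersection of a polytope with $\simplex\A$. I would phrase the adversarial prediction task as the zero-sum game with value $\min_{\predsigma\in\mathcal{C}}\max_{\truesigma\in\mathcal{C}} L(\truesigma,\predsigma)$, where $L(\truesigma,\predsigma) = -\sum_{a\in\A}\truesigma_a\log\predsigma_a$, and use the identity $L(\truesigma,\predsigma) = H(\truesigma) + D(\truesigma\,\|\,\predsigma)$ with $D$ the Kullback--Leibler divergence.

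First I would establish the structural ingredients for a minimax swap: $L$ is linear, hence concave, in $\truesigma$ for fixed $\predsigma$, and convex in $\predsigma$ for fixed $\truesigma$ since $-\log$ is convex; both players' strategy sets are the same compact convex set $\mathcal{C}$. Some care is needed where $\predsigma$ places zero mass, since $L$ is then $+\infty$ on any $\truesigma$ supported outside that of $\predsigma$: I would either extend $L$ to $[0,+\infty]$ and invoke the fact that lower semicontinuity of the convex argument suffices for Sion's theorem, or first argue that a minimizing $\predsigma$ cannot annihilate a coordinate that some $\truesigma\in\mathcal{C}$ charges and thereby restrict attention to the face of $\mathcal{C}$ on which $L$ is finite and continuous. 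Either way, Sion's minimax theorem yields $\min_{\predsigma\in\mathcal{C}}\max_{\truesigma\in\mathcal{C}} L = \max_{\truesigma\in\mathcal{C}}\min_{\predsigma\in\mathcal{C}} L$ together with a saddle point $(\predsigma^\dagger,\truesigma^\dagger)$.

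Next I would evaluate the inner optimization on the right-hand side: for fixed $\truesigma\in\mathcal{C}$, Gibbs' inequality ($D(\truesigma\,\|\,\predsigma)\ge 0$, with equality iff $\predsigma=\truesigma$) gives $\min_{\predsigma\in\mathcal{C}} L(\truesigma,\predsigma) = H(\truesigma)$, attained uniquely at the feasible point $\predsigma=\truesigma$. Hence the game value is $\max_{\truesigma\in\mathcal{C}} H(\truesigma)$, attained at the unique maximum entropy ICE distribution $\sigma^\star$ by strict concavity of $H$ on the simplex. Reading off the saddle point then finishes the argument: $\truesigma^\dagger$ must attain $\max_{\truesigma\in\mathcal{C}}\min_{\predsigma} L = H(\sigma^\star)$, while by the previous step $\min_{\predsigma} L(\truesigma^\dagger,\predsigma) = H(\truesigma^\dagger)$, so $H(\truesigma^\dagger)=H(\sigma^\star)$ forces $\truesigma^\dagger=\sigma^\star$; and $\predsigma^\dagger$, being a best response to $\truesigma^\dagger=\sigma^\star$, equals $\sigma^\star$ as well. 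Since $\predsigma^\dagger$ realizes the outer minimum, $\sigma^\star$ minimizes the worst-case log-loss over strongly rational distributions, which is the claim.

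The main obstacle I anticipate is the boundary bookkeeping: making the application of the minimax theorem rigorous in the presence of the $+\infty$ values that $L$ takes on the boundary of $\simplex\A$, and checking that passing to the relevant face of $\mathcal{C}$ does not alter the optimum or the uniqueness claims. The remaining pieces --- the log-loss decomposition, Gibbs' inequality, and strict concavity of Shannon entropy --- are standard, so the proof is mostly a matter of assembling them carefully.
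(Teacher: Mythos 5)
Your argument is correct in substance, and it reconstructs exactly the classical result the paper relies on: the paper gives no proof of its own beyond citing Gr\"unwald and Dawid's equivalence between maximum entropy and minimax (robust Bayes) log-loss over a convex set, which is precisely the statement you prove. Where you differ is in the machinery: you go through Sion's minimax theorem plus a saddle-point extraction, which forces the boundary bookkeeping you flag (the loss is only lower semicontinuous in the adversary's variable where $\predsigma$ has zero mass, so upper semicontinuity in the maximizing argument fails and attainment on your restricted face needs care, since that face is not closed). The route implicit in the cited work avoids this entirely: let $\sigma^\star$ be the entropy maximizer over the ICE polytope $\mathcal{C}$ and use first-order optimality of the concave maximization to get the Pythagorean-type inequality $H(\truesigma) + D(\truesigma\,\|\,\sigma^\star) \le H(\sigma^\star)$ for all $\truesigma\in\mathcal{C}$; combined with your decomposition $L(\truesigma,\predsigma)=H(\truesigma)+D(\truesigma\,\|\,\predsigma)$ this shows the worst-case loss of $\sigma^\star$ is exactly $H(\sigma^\star)$, while any other $\predsigma$ suffers at least $H(\sigma^\star)+D(\sigma^\star\,\|\,\predsigma)\ge H(\sigma^\star)$ when the adversary plays $\sigma^\star$ (which is feasible). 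That direct two-line comparison yields minimax optimality, uniqueness, and finiteness of the value without any minimax theorem or face restriction, so if you write this up I would recommend swapping Sion for the Pythagorean argument; otherwise your proof stands once the acknowledged boundary details are carried out.
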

The proof of Lemma \ref{lem:maxent} follows immediately from the result of Gr\"unwald and Dawid~\yrcite{grunwald2003}.


In the context of multi-agent behavior, the principle of maximum entropy has been employed to obtain correlated equilibria with predictive guarantees in normal-form games when the utilities are known \textit{a priori}~\cite{ortiz2007}.  We will now leverage its power with our rationality assumption to select predictive distributions in games where the utilities are unknown.

\subsection{Prediction of Behavior}
Let us first consider prediction of the demonstrated behavior using the principle of maximum entropy and our strong rationality condition.  After, we will extend to behavior transfer and analyze the error introduced as a by-product of sampling $\demonsigma$ from $\truesigma$.

The mathematical program that maximizes the entropy of $\predsigma$ under strong rationality with respect to $\demonsigma$,
\begin{align}
\argmax_{\predsigma,\eta} & \; H(\predsigma), \quad \mbox{subject to:} \\
\notag & \;\trans{\predsigma}{\Rmat{i}{f_i}} {=} \sum_{f_j\in\Phi}\eta_{f_j}^{f_i}\trans{\demonsigma}\Rmat{j}{f_j}, \forall f_i\in\Phi \\
\notag & \; \eta^{f_i}\in\simplex\Phi, \forall f_i\in\Phi;\quad\quad \predsigma\in\simplex\A,
\end{align}
is convex with linear constraints, feasible, and bounded.  That is, it is simple and can be efficient solved in this form.  Before presenting our preferred dual optimization procedure, however, let us describe an approach for behavior transfer that further illustrates the advantages of this approach over directly estimating $\truesigma$.

\subsection{Transfer of Behavior}
A principal justification of inverse optimal control techniques that attempt to identify behavior in terms of utility functions is the ability to consider what behavior might result if the underlying decision problem were changed while the interpretation of features into utilities remain the same~\cite{ng2000algorithms,ratliff2006}.  This enables prediction of agent behavior in a no-regret or agnostic sense in problems such as a robot encountering novel terrain~\cite{Silver_2010_6638} as well as route recommendation for drivers traveling to unseen destinations~\cite{Ziebart2008b}.  

Econometricians are interested in similar situations, but for much different reasons.  Typically, they aim to validate a model of market behavior from observations of product sales.  In these models, the firms assume a fixed pricing policy given known demand.  The econometrician uses this fixed policy along with product features and sales data to estimate or bound both the consumers' utility functions as well as unknown production parameters, like markup and production cost~\cite{berry95,nevo01,yang09}.  In this line of work, the observed behavior is considered accurate to start with; it is not suitable for settings with limited observations.

Until now, we have considered the problem of identifying behavior in a single game. We note, however, that our approach enables behavior \emph{transfer} to games equipped with the same features.  We denote this unobserved game as $\bar\Gamma$.  As with prediction, to develop a technique for behavior transfer we assume a link between regret and the agents' preferences across the known space of possible preferences.  Furthermore, we assume a relation between the regrets in both games.

\begin{property}[Transfer Rationality]
For some constant $\kappa > 0$,
\begin{align}
\forall w,\:\bar{R}^{\bar\Phi}(\bar{\sigma}, w) \le \kappa\Regret{\Phi}{\truesigma}{w}.
\end{align}
\end{property}
Roughly, we assume that under preferences with low regret in the original game, the behavior in the unobserved game should also have low regret.  By enforcing this property, if the agents are performing well with respect to their true preferences, then the transferred behavior will also be of high quality.

As we are not privileged to know $\kappa$ and this property is not guaranteed to hold, we introduce a slack variable to allow for violations of the strong rationality constraints to guaranteeing feasibility. Intuitively, the \emph{transfer-ICE polytope} we now optimize over requires that for any linear reward function and for every player, the predicted behavior in a new game must have no more regret than demonstrated behavior does in the observed game using the same parametric form of reward function.  The corresponding mathematical program is:
\begin{align}
  \max_{\predsigma,\eta,\nu} & \; H(\predsigma) - C\nu,\quad\mbox{subject to:}\\
  \notag & \; \trans{\predsigma}{\baRmat{i}{f_i}} - \sum_{f_j\in\Phi}\eta_{f_j}^{f_i}\trans{\demonsigma}\Rmat{j}{f_j} \le \nu, \forall f_i\in\bar\Phi \\
  \notag & \; \sum_{f_j\in\Phi}\eta_{f_j}^{f_i}\trans{\demonsigma}\Rmat{j}{f_j} - \trans{\predsigma}{\baRmat{i}{f_i}} \le \nu, \forall f_i\in\bar\Phi \\
  \notag & \; \eta^{f_i}\in\simplex\Phi, \forall f_i\in\bar\Phi;
\; \; \predsigma\in\simplex\A;
\; \; \nu \ge 0.
\end{align}
In the above formulation, $C > 0$ is a slack penalty parameter, which allows us to choose the trade-off between obeying the rationality constraints and maximizing the entropy.  Additionally, we have omitted $\kappa$ above by considering it intrinsic to $R$.

We observe that this program is almost identical to the behavior prediction program introduced above.  We have simply made substitutions of the regret matrices and modification sets in the appropriate places.  That is, if $\bar\Gamma = \Gamma$, we recover prediction with a slack.

Given $\predsigma$ and $\nu$, we can bound the violation of the strong rationality constraint for any utility vector.
\begin{lemma}
If $\predsigma$ violates the strong rationality constraints in the slack formulation by $\nu$ then for all $w$
\begin{align}
\Regret{\Phi}{\predsigma}{w} \le \Regret{\Phi}{\demonsigma}{w} + \nu\norm{w}_1.
\end{align}
\end{lemma}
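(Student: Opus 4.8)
The plan is to fix an arbitrary $w \in \R^K$ and bound $\Regret{\Phi}{\predsigma}{w}$ directly by unwinding the slack constraints, much as in the proof of Theorem~\ref{thm:iff} but carrying the per-constraint error $\nu$ through. First I would let $f_i \in \Phi$ attain the maximum, so that $\Regret{\Phi}{\predsigma}{w} = \trans{\predsigma}\Rmat{i}{f_i}w$. The two families of slack inequalities say that, read coordinate-wise in $\R^K$, $-\nu \le \trans{\predsigma}\Rmat{i}{f_i} - \sum_{f_j\in\Phi}\eta_{f_j}^{f_i}\trans{\demonsigma}\Rmat{j}{f_j} \le \nu$ for every $f_i$; equivalently, setting $\delta := \trans{\predsigma}\Rmat{i}{f_i} - \sum_{f_j\in\Phi}\eta_{f_j}^{f_i}\trans{\demonsigma}\Rmat{j}{f_j}$, we have $\norm{\delta}_\infty \le \nu$. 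Then $\trans{\predsigma}\Rmat{i}{f_i}w = \sum_{f_j\in\Phi}\eta_{f_j}^{f_i}\trans{\demonsigma}\Rmat{j}{f_j}w + \trans{\delta}w$, which splits the quantity of interest into a ``demonstrated'' part and an ``error'' part.

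For the error part, Hölder's inequality gives $\trans{\delta}w \le \norm{\delta}_\infty \norm{w}_1 \le \nu\norm{w}_1$. For the demonstrated part, since $\eta^{f_i}\in\simplex\Phi$, the quantity $\sum_{f_j\in\Phi}\eta_{f_j}^{f_i}\bigl(\trans{\demonsigma}\Rmat{j}{f_j}w\bigr)$ is a convex combination of the scalars $\trans{\demonsigma}\Rmat{j}{f_j}w$ over $f_j\in\Phi$, hence is at most $\max_{f_j\in\Phi}\trans{\demonsigma}\Rmat{j}{f_j}w = \Regret{\Phi}{\demonsigma}{w}$. Adding the two bounds yields $\Regret{\Phi}{\predsigma}{w} \le \Regret{\Phi}{\demonsigma}{w} + \nu\norm{w}_1$, and since $w$ was arbitrary the lemma follows.

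The argument is essentially routine given Theorem~\ref{thm:iff}; the only points that need care are the bookkeeping for the vector-valued constraints --- reading the scalar $\nu$ on the right-hand sides as a bound on every coordinate, so that the two inequality families together control $\norm{\delta}_\infty$ --- and then pairing that $\ell_\infty$ bound with $\norm{w}_1$ via the correct instance of Hölder's inequality rather than, say, Cauchy--Schwarz. I expect no genuine obstacle beyond this. If one wants the fully general transfer statement, the same steps carry over verbatim with $\baRmat{i}{f_i}$ and $\bar\Phi$ replacing $\Rmat{i}{f_i}$ and $\Phi$ on the predicted side, giving $\bar{R}^{\bar\Phi}(\predsigma,w) \le \Regret{\Phi}{\demonsigma}{w} + \nu\norm{w}_1$.
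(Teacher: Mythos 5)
Your proof is correct, and it is essentially the intended argument: the paper itself states this lemma without proof (the appendix only invokes it inside the proof of Corollary~\ref{corollary:regretsclose}), and the natural route is exactly what you do --- reuse the decomposition from Theorem~\ref{thm:iff}, carry the per-coordinate slack through as an error vector $\delta$ with $\norm{\delta}_\infty \le \nu$, and pair it with $\norm{w}_1$ via H\"older. Your coordinate-wise reading of the two slack inequality families is the right one; it is confirmed by the dual derivation, where each constraint gets a vector of multipliers $\alpha^{f_i},\beta^{f_i}\in\R^K$ and the slack penalty sums over the $K$ coordinates, which is also what makes the $C \ge \norm{w}_1$ interpretation in the text consistent.
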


One could choose to institute multiple slack variables, say one for each $f_i\in\bar\Phi$, instead of a single slack across all modification functions.  Our choice is motivated by the interpretation of the dual multipliers presented in the next section.  There, we will also address selection of an appropriate value for $C$.

\section{Duality and Efficient Optimization}
In this section, we will derive, interpret and describe a procedure for optimizing the dual program for solving the MaxEnt ICE problem.  We will see that the dual multipliers can be interpreted as utility vectors and that optimization in the dual has computational advantages.  We begin by presenting the dual of the transfer program.
\begin{align}
\notag \min_{\alpha,\beta,\xi} & \; \sum_{f_i\in\bar\Phi} \max_{f_j\in\Phi}\left[\trans{\demonsigma}\Rmat{j}{f_j}(\alpha^{f_i}-\beta^{f_i})\right] + \log Z(\alpha,\beta)\\
\notag\mbox{subject to:} & \; \xi + \sum_{f_i\in\bar\Phi}\sum_{k=1}^K \alpha^{f_i}_k + \beta^{f_i}_k = C, 
\; \alpha,\beta, \xi \ge 0.
\end{align}
where $Z(\alpha,\beta)$ is the partition function,
\begin{align}
\notag Z(\alpha,\beta) = \sum_{a\in\bar\A}\exp\left(-\sum_{f_i\in\bar\Phi} \barvec{i}{f_i\mathrm{T}}{a}(\alpha^{f_i} - \beta^{f_i})\right).
\end{align}

Removing the equality constraint is equivalent to disallowing any slack.  We derive the dual in the appendix~\cite{waugh11arXiv}.


For $C > 0$, the dual's feasible set has non-empty interior and is bounded.  Therefore, by Slater's condition, strong duality holds -- there is no duality gap.  In particular, we can use a dual solution to recover $\predsigma$.
\begin{lemma}
Given a dual solution, $(\alpha,\beta)$, we can recover the primal solution, $\predsigma$.  Specifically,
\begin{align}
\predsigma_a = \exp\left(-\sum_{f_i\in\bar\Phi} \barvec{i}{f_i\mathrm{T}}{a}(\alpha^{f_i} - \beta^{f_i})\right) / Z(\alpha,\beta).
\end{align}
\label{lemma:primsol}
\end{lemma}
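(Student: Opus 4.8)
The plan is to derive the stated closed form directly from the stationarity (KKT) conditions of the primal transfer program, exploiting the fact—already established in the excerpt—that strong duality holds for $C>0$ because Slater's condition is met. First I would form the Lagrangian of the primal maximization
$\max_{\predsigma,\eta,\nu} H(\predsigma) - C\nu$ subject to the two families of slacked regret inequalities, the simplex constraint $\predsigma\in\simplex{\bar\A}$, the simplex constraints $\eta^{f_i}\in\simplex\Phi$, and $\nu\ge 0$. I would attach multipliers $\alpha^{f_i}\ge 0$ to the upper regret inequalities $\trans{\predsigma}{\baRmat{i}{f_i}} - \sum_{f_j}\eta^{f_i}_{f_j}\trans{\demonsigma}\Rmat{j}{f_j} \le \nu$, multipliers $\beta^{f_i}\ge 0$ to the lower inequalities, a single multiplier (say $\mu$) to the normalization $\sum_a \predsigma_a = 1$, and a multiplier $\xi\ge 0$ to $\nu\ge 0$; the $\eta$-simplex and $\predsigma\ge 0$ constraints I expect to stay inactive at the optimum (the latter because $\predsigma_a\log\predsigma_a\to 0$ forces an interior solution, as in the standard MaxEnt derivation).

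The key step is differentiating the Lagrangian with respect to a single coordinate $\predsigma_a$. The entropy term contributes $-\log\predsigma_a - 1$; each upper inequality contributes $-\alpha^{f_i}\,\bigl(\baRmat{i}{f_i}\bigr)_a = -\barvec{i}{f_i\mathrm{T}}{a}\alpha^{f_i}$ (reading off row $a$ of the regret matrix $\baRmat{i}{f_i}$), each lower inequality contributes $+\barvec{i}{f_i\mathrm{T}}{a}\beta^{f_i}$, and the normalization contributes $-\mu$. Setting this to zero and solving gives
\begin{align}
\log\predsigma_a = -1 - \mu - \sum_{f_i\in\bar\Phi}\barvec{i}{f_i\mathrm{T}}{a}(\alpha^{f_i}-\beta^{f_i}),
\end{align}
so that $\predsigma_a \propto \exp\bigl(-\sum_{f_i\in\bar\Phi}\barvec{i}{f_i\mathrm{T}}{a}(\alpha^{f_i}-\beta^{f_i})\bigr)$. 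The constant $e^{-1-\mu}$ is then pinned down by the normalization $\sum_{a\in\bar\A}\predsigma_a = 1$, which forces $e^{1+\mu} = Z(\alpha,\beta)$ with $Z$ exactly the partition function displayed before the lemma. This yields the claimed formula. Finally I would invoke strong duality: since there is no duality gap and the dual optimum $(\alpha,\beta,\xi)$ is attained, the primal optimum $\predsigma$ is the unique feasible point satisfying these stationarity conditions together with complementary slackness, so plugging the optimal dual variables into the boxed expression recovers the optimal primal $\predsigma$.

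The main obstacle, I expect, is bookkeeping rather than anything conceptual: one must be careful that the coefficient of $\predsigma_a$ coming from the regret constraints is precisely the $a$-th entry of $(\alpha^{f_i}-\beta^{f_i})$ contracted against the regret-vector row $\barvec{i}{f_i\mathrm{T}}{a}$, matching the sign convention $R^{f_i}(\sigma,w)=\trans\sigma R^{f_i}_i w$ used throughout, and that the $\eta$ variables (which couple $\alpha,\beta$ to the $\demonsigma$-side terms and generate the $\max_{f_j\in\Phi}$ in the dual objective) genuinely drop out of the $\predsigma$-stationarity equation because $\eta$ does not multiply $\predsigma$. A secondary point worth stating cleanly is why the nonnegativity multipliers for $\predsigma$ vanish—this is the usual argument that the entropy barrier makes the constraint $\predsigma_a\ge 0$ inactive—so that the stationarity condition holds with equality in every coordinate and the exponential form is valid for all $a\in\bar\A$.
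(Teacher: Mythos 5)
Your proposal is correct and follows essentially the same route as the paper: the paper also reads off the stationarity condition $\log\predsigma_a = -1 - \sum_{f_i\in\bar\Phi}\barvec{i}{f_i\mathrm{T}}{a}(\alpha^{f_i}-\beta^{f_i}) - \delta + v_a$ from the Lagrangian of the transfer program, notes the nonnegativity multipliers $v$ vanish, and substitutes the optimal normalization multiplier (your $\mu$, the paper's $\delta$), which equals the log-partition function, before exponentiating. The only difference is bookkeeping: you pin down the constant via $\sum_a\predsigma_a=1$ while the paper eliminates $\delta$ by setting its partial derivative to zero during the dual derivation, which is the same computation.
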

Intuitively, the probability of predicting an outcome is small if that outcome has high regret.

\begin{algorithm}[tb]
   \caption{Dual MaxEnt ICE}
   \label{alg:da}
\begin{algorithmic}
  \STATE {\bfseries Input:} $T, \gamma, C > 0$, $R, \bar{R}, \Phi$ and $\bar{\Phi}$
  \STATE $\forall f_i\in\bar\Phi,\; \alpha^{f_i}, \beta^{f_i} \leftarrow 1/(|\bar\Phi|K+1)$
  \FOR{$t$ from $1$ to $T$}
  \STATE /* compute the gradient */
  \STATE $\forall a\in\bar\A,\; z_a \leftarrow \exp\left(-\sum_{f_i\in\bar\Phi} \barvec{i}{f_i\mathrm{T}}{a}(\alpha^{f_i} - \beta^{f_i})\right)$
  \STATE $Z \leftarrow \sum_{a\in\bar\A} z_a$
  \FOR{$f_i \in \bar\Phi$}
  \STATE $f_j^* \leftarrow \argmax_{f_j\in\Phi}\trans{\demonsigma}\Rmat{j}{f_j}(\alpha^{f_i} - \beta^{f_i})$
  \STATE $g^{f_i} \leftarrow \trans{\demonsigma}\Rmat{j^*}{f^*_j} - \sum_{a\in\bar\A}z_a\barvec{i}{f_i}{a}/Z$
  \ENDFOR
  \STATE  /* descend and project */
  \STATE $\gamma_t \leftarrow \gamma/\sqrt{t}$
  \STATE $\rho \leftarrow 1 + \sum_{f_i,k}\alpha^{f_i}_k\exp(-\gamma_t g^{f_i}_k) + \beta^{f_i}_k\exp(\gamma_t g^{f_i}_k)$
  \STATE $\forall f_i\in\bar\Phi, k\in K,\; \alpha^{f_i}_k \leftarrow C\alpha^{f_i}_k\exp(-\gamma_t g^{f_i}_k)/\rho$
  \STATE $\forall f_i\in\bar\Phi, k\in K,\; \beta^{f_i}_k \leftarrow C\beta^{f_i}_k\exp(\gamma_t g^{f_i}_k)/\rho$
  \ENDFOR
  \STATE {\bfseries return} $(\alpha,\beta)$
\end{algorithmic}
\label{alg:dual}
\end{algorithm}

In general, the dual multipliers are utility vectors associated with each modification function in $\bar\Phi$.  Under the slack formulation, there is a natural interpretation of these variables as a single utility vector.
Given a dual solution, $(\alpha,\beta)$ with slack penalty $C$, we choose
\begin{align}
\lambda^{f_i} & = \alpha^{f_i} - \beta^{f_i}, \\
\pi^{f_i} & = \frac{1}{C}\sum_{k=1}^K \alpha^{f_i}_k + \beta^{f_i}_k, \mbox{and} \\
\predw & = \sum_{f_i\in\bar\Phi}\pi^{f_i}\lambda^{f_i}.
\end{align}
That is, we can associate with each modification function a probability, $\pi^{f_i}$, and a utility vector, $\lambda^{f_i}$.  Thus, a natural estimate for $\predw$ is the expected utility vector.  Note, $\sum_{f_i\in\bar\Phi}\pi^{f_i}$ need not sum to one.  The remaining mass, $\xi$, is assigned to the zero utility vector.

The above observation implies that introducing a slack variable coincides with bounding the $L_1$ norm of the utility vectors under consideration by $C$.  This insight suggests that we choose $C \ge \norm{w^*}_1$, if possible, as smaller values of $C$ will exclude $w^*$ from the feasible set.  If a bound on the $L_1$ norm is not available, we may solve the prediction problem on the observed game without slack and use $\norm{\predw}_1$ as a proxy.

The dual formulation of our program has important inherent computational advantages.  First, it is a optimization over a simple set that is particularly well-suited for gradient-based optimization, a trait not shared by the primal program.  Second, the number of dual variables, $2|\Phi|K$, is typically much fewer than the number of primal variables, $|\A|+2|\Phi|^2$.  Though the work per iteration is still a function of $|\A|$ (to compute the partition function), these two advantages together let us scale to larger problems than if we consider optimizing the primal objective. Computing the expectations necessary to descend the dual gradient can leverage recent advances in the structured, compact game representations: in particular, any graphical game with low-treewidth or finite horizon Markov game~\cite{kakade2003correlated} enables these computations to be performed in time that scales only polynomially in the number of decision makers or time-steps.  

Algorithm~\ref{alg:dual} employs exponentiated gradient descent~\cite{kivinen95} to find an optimal dual solution.  The step size parameter, $\gamma$, is commonly taken to be $\sqrt{2\log |\bar\Phi|K}/\Delta$, with $\Delta$ being the largest value in any $R^{f_i}_i$.  With this step size, if the optimization is run for $T\ge 2\Delta^2\log\left(|\bar\Phi|K\right) / \epsilon^2$ iterations then the dual solution will be within $\epsilon$ of optimal.  Alternatively, one can exactly measure the duality gap on each iteration and halt when the desired accuracy is achieved.  This is often preferred as the lower bound on the number of iterations is conservative in practice.





\section{Sample Complexity}

In practice, we do not have full access to the agents' true behavior -- if we did, prediction would be straightforward and not require our estimation technique. Instead, we can only approximate it through finite observation of play.  In real applications there are costs associated with gathering these observations and, thus, there are inherent limitations on the quality of this approximation.  In this section, we will analyze the sensitivity of our approach to these types of errors.

First, although $|\A|$ is exponential in the number of players, our technique only accesses $\demonsigma$ through products of the form $\demonsigma\Rmat{j}{f_j}$.  That is, we need only approximate these products accurately, not the distribution $\demonsigma$.  As a result, we can bound the approximation error in terms of $|\Phi|$ and $K$.
\begin{theorem}
With probability at least $1-\delta$, for any $w$, by observing $M \ge \frac{2}{\epsilon^2}\log\frac{2|\Phi|K}{\delta}$
outcomes we have $\Regret{\Phi}{\demonsigma}{w} \le \Regret{\Phi}{\truesigma}{w} + \epsilon\Delta\norm{w}_1$.
\label{thm:sample}
\end{theorem}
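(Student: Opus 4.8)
The plan is to exploit the remark preceding the statement: the demonstrated behavior $\demonsigma$ enters every quantity our method touches only through the vectors $\trans{\demonsigma}\Rmat{j}{f_j}\in\R^K$, one for each modification function $f_j\in\Phi$. Each coordinate of such a vector is, by construction, an empirical average over the $M$ observed outcomes of a coordinate of the regret row $\rvec{j}{f_j\mathrm{T}}{a}$, a number bounded in magnitude by $\Delta$. So I would (i) use Hoeffding's inequality together with a union bound to control all $|\Phi|K$ of these scalar averages at once, and then (ii) push that uniform control through to the regret functional $\Regret{\Phi}{\cdot}{w}$ using H\"older's inequality and the fact that a pointwise maximum is nonexpansive.

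For step (i), I would fix $f_j\in\Phi$ and a coordinate $k\le K$. Writing $a_1,\dots,a_M$ for the i.i.d.\ draws from $\truesigma$, the $k$-th coordinate of $\trans{\demonsigma}\Rmat{j}{f_j}$ equals $\frac1M\sum_{m=1}^M(\rvec{j}{f_j}{a_m})_k$, whose expectation is the $k$-th coordinate of $\trans{\truesigma}\Rmat{j}{f_j}$ and whose summands lie in $[-\Delta,\Delta]$. Hoeffding's inequality then bounds the probability of a deviation exceeding $\epsilon\Delta$ by $2\exp(-M\epsilon^2/2)$. A union bound over all $|\Phi|K$ pairs $(f_j,k)$ makes the probability that \emph{any} coordinate of \emph{any} of these products is off by more than $\epsilon\Delta$ at most $2|\Phi|K\exp(-M\epsilon^2/2)$, which is $\le\delta$ precisely when $M\ge\frac{2}{\epsilon^2}\log\frac{2|\Phi|K}{\delta}$. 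Hence, with probability $1-\delta$, $\norm{\trans{\demonsigma}\Rmat{j}{f_j}-\trans{\truesigma}\Rmat{j}{f_j}}_\infty\le\epsilon\Delta$ for every $f_j\in\Phi$ simultaneously.

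For step (ii), I would condition on this good event and fix an arbitrary $w\in\R^K$; crucially $w$ is introduced \emph{after} the event is fixed, so one event serves all $w$. H\"older's inequality gives, for each $f_i\in\Phi$, $|\trans{\demonsigma}\Rmat{i}{f_i}w-\trans{\truesigma}\Rmat{i}{f_i}w|\le\epsilon\Delta\norm{w}_1$. Since $\Regret{\Phi}{\sigma}{w}=\max_{f_i\in\Phi}\trans{\sigma}\Rmat{i}{f_i}w$ is a maximum over this family of linear functionals, and the constant $\epsilon\Delta\norm{w}_1$ pulls out of the maximum, we get $\Regret{\Phi}{\demonsigma}{w}\le\max_{f_i\in\Phi}\big(\trans{\truesigma}\Rmat{i}{f_i}w+\epsilon\Delta\norm{w}_1\big)=\Regret{\Phi}{\truesigma}{w}+\epsilon\Delta\norm{w}_1$, which is the claim.

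This is routine concentration of measure, so I do not expect a serious obstacle; the one point that needs care is the order of quantifiers. One must first establish $\ell_\infty$-closeness of the $w$-independent objects $\trans{\demonsigma}\Rmat{j}{f_j}$ and only afterwards introduce $w$ and the maximum over $\Phi$ — attempting to bound $\Regret{\Phi}{\demonsigma}{w}$ directly for a fixed $w$ would not produce a single high-probability event valid uniformly in $w$, and it would also obscure the key feature that the sample size depends only on $|\Phi|$ and $K$, never on $|\A|$. A minor bookkeeping remark: $\Phi=\cup_{i\in N}\Phi_i$, so both the union bound and the final maximum range over (player, modification-function) pairs, but $|\Phi|$ already counts exactly these, so nothing changes.
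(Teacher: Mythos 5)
Your proposal is correct and follows essentially the same route as the paper: Hoeffding's inequality on each coordinate of $\trans{\demonsigma}\Rmat{i}{f_i}$ combined with a union bound over all $|\Phi|K$ coordinate pairs, then solving $2|\Phi|K\exp(-\epsilon^2 M/2)\le\delta$ for $M$. The only difference is that you spell out explicitly the final H\"older/max step from coordinate-wise $\epsilon\Delta$-closeness to $\Regret{\Phi}{\demonsigma}{w}\le\Regret{\Phi}{\truesigma}{w}+\epsilon\Delta\norm{w}_1$ (and the uniformity in $w$), which the paper leaves implicit.
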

The proof is an application of Hoeffding's inequality and is provided in the Appendix~\cite{waugh11arXiv}.  As an immediate corollary, considering only the true, but unknown, reward function $w^*$:
\begin{corollary}
With probability at least $1-\delta$, by sampling according to the above rule, $\Regret{\Phi}{\predsigma}{w^*} \le \Regret{\Phi}{\truesigma}{w^*} + (\epsilon\Delta+\nu)\norm{w^*}_1$ for $\predsigma$ with slack $\nu$.
\label{corollary:regretsclose}
\end{corollary}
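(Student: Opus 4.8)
The plan is to chain together Theorem~\ref{thm:sample}, Lemma~3 (the slack-violation bound), and the triangle inequality on the sampling error. The corollary concerns only the fixed true reward $w^*$, so I would instantiate each general statement at $w = w^*$.

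First, apply Lemma~3 (the bound $\Regret{\Phi}{\predsigma}{w} \le \Regret{\Phi}{\demonsigma}{w} + \nu\norm{w}_1$ for a distribution $\predsigma$ that violates the strong rationality constraints by $\nu$) at $w = w^*$, yielding
\begin{align}
\notag \Regret{\Phi}{\predsigma}{w^*} \le \Regret{\Phi}{\demonsigma}{w^*} + \nu\norm{w^*}_1.
\end{align}
Next, apply Theorem~\ref{thm:sample} at $w = w^*$: with probability at least $1-\delta$, provided $M \ge \frac{2}{\epsilon^2}\log\frac{2|\Phi|K}{\delta}$, we have $\Regret{\Phi}{\demonsigma}{w^*} \le \Regret{\Phi}{\truesigma}{w^*} + \epsilon\Delta\norm{w^*}_1$. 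Substituting this into the previous display and collecting the two terms proportional to $\norm{w^*}_1$ gives exactly $\Regret{\Phi}{\predsigma}{w^*} \le \Regret{\Phi}{\truesigma}{w^*} + (\epsilon\Delta + \nu)\norm{w^*}_1$, which is the claim.

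There is essentially no hard step here; the only things to check are bookkeeping points. One must confirm that $\predsigma$ returned by the transfer/slack program indeed violates the strong rationality constraints by at most $\nu$ in the sense required by Lemma~3 — this is immediate from the primal constraints of the slack formulation, which bound $\trans{\predsigma}{\baRmat{i}{f_i}} - \sum_{f_j}\eta^{f_i}_{f_j}\trans{\demonsigma}\Rmat{j}{f_j}$ (and its negation) by $\nu$ for every $f_i$. One should also note that the high-probability event and the sample bound are inherited verbatim from Theorem~\ref{thm:sample}, so the ``with probability at least $1-\delta$'' and ``sampling according to the above rule'' phrasing carries over directly; no union bound or additional failure probability is incurred since we only invoke the theorem once, at $w^*$. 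If anything is a mild subtlety, it is simply making explicit that Theorem~\ref{thm:sample} holds simultaneously for all $w$ (it is stated ``for any $w$'' under a single $1-\delta$ event), so specializing to $w^*$ is free.
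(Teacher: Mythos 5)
Your proof is correct and follows the paper's own argument exactly: chain the slack-violation bound (Lemma 3) with Theorem~\ref{thm:sample}, both instantiated at $w^*$, and collect the two $\norm{w^*}_1$ terms. The bookkeeping points you flag (the high-probability event carrying over and the theorem holding uniformly in $w$) are handled implicitly in the paper's proof in the same way.
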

That is, so long as we assume bounded utility, with high probability we need only logarithmic many samples in terms of $|\Phi|$ and $K$ to closely approximate $\truesigma\Rmat{j}{f_j}$ and avoid a large violation of our rationality condition.

We note that choosing $\Phi = \modint$ is particularly appealing, as $|\modint| \le |N|A^2$, compared to $|\swap| \le |N|A!$.  As internal regret closely approximates swap regret, we do not lose much of the strategic complexity by choosing the more limited set, but we require both fewer observations and fewer computational resources.

\section{Experimental Results}

To evaluate our approach experimentally, we designed a simple routing game shown in Figure~\ref{fig:routing}.  
Seven drivers in this game choose how to travel home during rush hour after a long day at the office.  The different road segments have varying capacities, visualized by the line thickness in the figure, that make some of them more or less susceptible to congestion or to traffic accidents.  Upon arrival home, each driver records the total time and distance they traveled, the gas that they used, and the amount of time they spent stopped at intersections or in traffic jams -- their utility features.

\begin{figure}[h]
\begin{center}
\centerline{\includegraphics[width=\columnwidth]{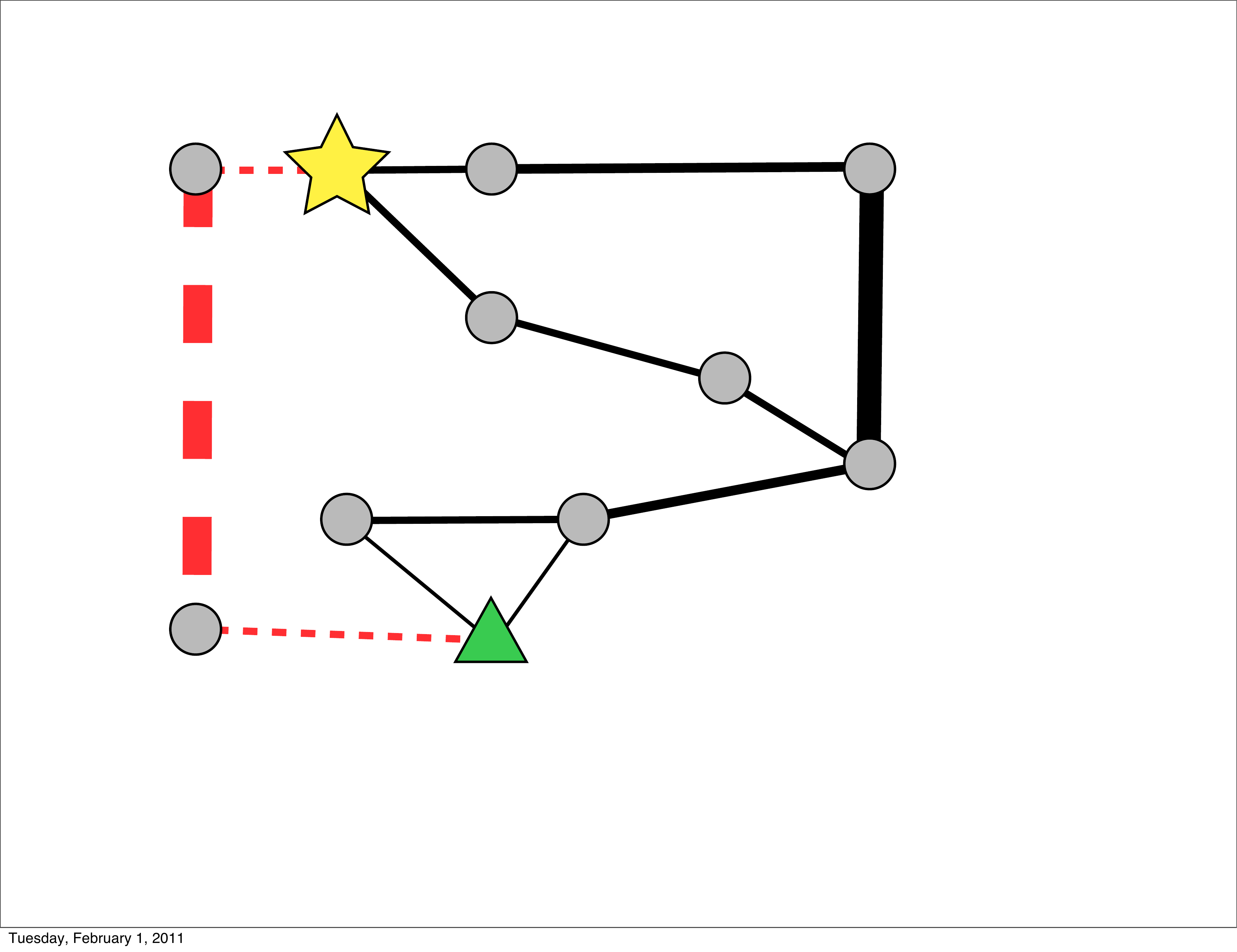}}
\caption{A visualization of the routing game.}
\label{fig:routing}
\end{center}
\vskip -0.2in
\end{figure}

In this game, each of the drivers chooses from four possible routes (solid lines in Figure \ref{fig:routing}), yielding over 16,000 possible outcomes.  We obtained an $\varepsilon$-social welfare maximizing correlated equilibrium for those drivers where the drivers preferred mainly to minimize their travel time, but were also slightly concerned with gas usage.  The demonstrated behavior $\demonsigma$ was sampled from this true behavior distribution $\truesigma$.

First, we evaluate the differences between the true behavior distribution $\truesigma$ and the predicted behavior distribution $\predsigma$ trained from observed behavior sampled from $\demonsigma$.  In Figure~\ref{fig:logloss} we compare the prediction accuracy when varying the number of observations using log-loss, $-\sum_{a\in\A} \truesigma_a \log \predsigma_a$.  The baseline algorithms we compare against are: a maximum likelihood estimate of the distribution over the joint-actions with a uniform prior, an exponential family distribution parameterized by the outcome's utilities trained with logistic regression, and a maximum entropy inverse optimal control approach \cite{ziebart2008} trained individually for each player.

\begin{figure}[h]
\begin{center}
\centerline{\includegraphics[width=\columnwidth]{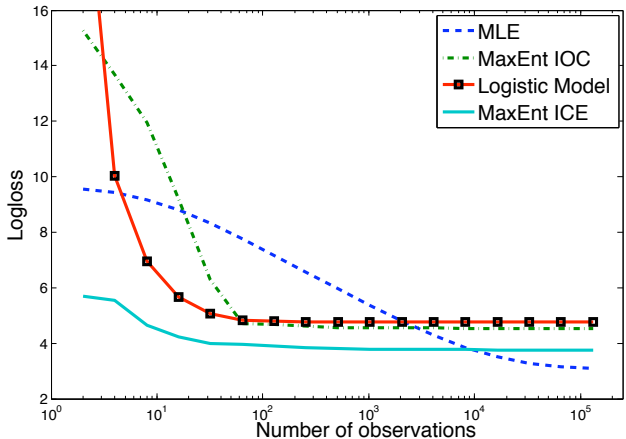}}
\caption{Prediction error (log-loss) as a function of number of observations.}
\label{fig:logloss}
\end{center}
\vskip -0.2in
\end{figure}

In Figure~\ref{fig:logloss}, we see that MaxEnt ICE predicts behavior with higher accuracy than all other algorithms when the number of observations is limited.  In particular, it achieves close to its best performance with as few at $16$ observations.  The maximum likelihood estimator eventually overtakes it, as expected since it will ultimately converge to $\truesigma$, but only after 10,000 observations, or about as many observations as there are outcomes in the game.  This experiment demonstrates that learning underlying utility functions to estimate observed behavior can be much more data-efficient for small sample sizes, and additionally, that the regret-based assumptions of MaxEnt ICE are both reasonable and beneficial in our strategic routing game setting.

Next, we evaluate behavior transfer from this routing game to four similar games, the results of which are displayed in Table~\ref{tbl:transfer}.  The first game, {\em Add Highway}, adds the dashed route to the game.  That is, we model the city building a new highway.  The second game, {\em Add Driver}, adds another driver to the game.  The third game, {\em Gas Shortage}, keeps the structure of the game the same, but changes the reward function to make gas mileage more important to the drivers.  The final game, {\em Congestion}, adds construction to the major roadway, delaying the drivers.

\begin{table}[t]
\caption{Transfer error (log-loss) on unobserved games.}
\label{tbl:transfer}
\vskip 0.15in
\begin{center}
\begin{small}
\begin{sc}
\begin{tabular}{lcc}
\hline
\abovespace\belowspace
Problem & Logistic Model & MaxEnt Ice \\
\hline
\abovespace
Add Highway & 4.177  & 3.093 \\
Add Driver & 4.060 & 3.477 \\
Gas Shortage & 3.498 & 3.137 \\
Congestion & 3.345 & 2.965 \\
\hline
\end{tabular}
\end{sc}
\end{small}
\end{center}
\vskip -0.2in
\end{table}

These transfer experiments even more directly demonstrate the benefits of learning utility weights rather than directly learning the joint-action distribution; direct strategy-learning approaches are incapable of being applied to general transfer setting.  Thus, we only compare against the Logistic Model.  We see from Table~\ref{tbl:transfer} that MaxEnt ICE outperforms the Logistic Model in all of our tests.  For reference, in these new games, the uniform strategy has a loss of approximately $6.8$ in all games, and the true behavior has a loss of approximately $2.7$.

\section{Conclusion}
In this paper, we extended inverse optimal control to multi-agent settings by combining the principle of maximum entropy with the game-theoretic notion of regret.  We observed that our formulation has a particularly appealing dual program, which led to a simple gradient-based optimization procedure.  Perhaps the most appealing quality of our technique is its theoretical and practical sample complexity.  In our experiments, MaxEnt ICE performed exceptionally well after only $0.1\%$ of the game had been observed.
\section*{Acknowledgments}
This work is supported by the ONR MURI grant N00014-09-1-1052 and by the National Sciences and Engineering Research Council of Canada (NSERC).

{
\bibliography{paper}
\bibliographystyle{icml2011}
}

\newpage

\onecolumn
\appendix
\section*{Appendix}
\subsection*{Rationality Properties and Primal Programs}

The proof of Theorem~\ref{thm:iff} relies upon the following technical 
lemmas.

\begin{lemma}
 \[
 \trans{b}{w} \le \max_{a_i\in A}\trans{a_i}{w} \Leftrightarrow \exists \lambda\in\simplex A \st \trans{b}{w} \le \lambdaT Aw.
 \]
 \label{lemma:tech1}
 \end{lemma}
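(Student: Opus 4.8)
The plan is to prove both implications directly from the definition of the probability simplex, with no machinery beyond the observation that a convex combination of real numbers lies between their minimum and maximum. Throughout I read $Aw$ as the vector whose $i$-th coordinate is $\trans{a_i}w$, so that $\max_{a_i\in A}\trans{a_i}w$ is its largest coordinate and, for $\lambda\in\simplex A$, $\lambdaT Aw = \sum_i \lambda_i\,\trans{a_i}w$ is a convex combination of those coordinates.

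For the ($\Leftarrow$) direction, suppose some $\lambda\in\simplex A$ satisfies $\trans{b}w \le \lambdaT Aw$. Since $\lambda$ is a probability vector, $\lambdaT Aw = \sum_i \lambda_i\,\trans{a_i}w \le \bigl(\sum_i \lambda_i\bigr)\max_{a_i\in A}\trans{a_i}w = \max_{a_i\in A}\trans{a_i}w$; chaining this with the hypothesis gives $\trans{b}w \le \max_{a_i\in A}\trans{a_i}w$. For the ($\Rightarrow$) direction, pick an index $i^\star$ attaining $\max_{a_i\in A}\trans{a_i}w$ and let $\lambda = e_{i^\star}$, the corresponding vertex of $\simplex A$. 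Then $\lambdaT Aw = \trans{a_{i^\star}}w = \max_{a_i\in A}\trans{a_i}w \ge \trans{b}w$ by assumption, so this $\lambda$ witnesses the right-hand side.

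I expect no genuine obstacle here: the statement is essentially the elementary linear-programming fact that the optimum of a linear functional over a simplex is attained at a vertex, which is exactly what makes the scalar bound against the coordinate-wise maximum equivalent to the bound against the best convex combination. The only care needed is notational bookkeeping — keeping the indexing of the rows of $A$ consistent with the coordinates of $\lambda\in\simplex A$, and noting that the chosen maximizing vertex $e_{i^\star}$ is indeed a feasible point of $\simplex A$.
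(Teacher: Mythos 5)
Your proposal is correct and matches the paper's proof essentially step for step: for ($\Rightarrow$) you take $\lambda$ to be the vertex (indicator) at a maximizing row, and for ($\Leftarrow$) you bound the convex combination $\lambdaT Aw$ by $\max_{a_i\in A}\trans{a_i}w$ using $\sum_i\lambda_i = 1$. No further comment is needed.
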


\begin{proof}[Proof of Lemma~\ref{lemma:tech1}]
Given $\trans{b}{w} \le \max_{a_i\in A}\trans{a_i}{w}$, choose
\begin{align}
\lambda_i = \left\{\begin{array}{cl}%
1 & \mbox{if $a_i = \argmax_{a_i\in A} \trans{a_i}{w}$} \\
0 & \mbox{otherwise}
\end{array}\right.
\end{align}
Thus, $\trans{b}{w} \le \max_{a_i\in A}\trans{a_i}{w} = \lambdaT Aw$. \\
Given $\exists \lambda\in\simplex A \st \trans{b}{w} \le \lambdaT Aw$,
\begin{align}
& \; \trans{b}{w} \le \lambdaT Aw \\
\le & \; \sum_{a_j\in A}\lambda_{a_j}\max_{a_i\in A}\trans{a_i}{w} \\
= & \; \left[\max_{a_i\in A}\trans{a_i}{w}\right]\sum_{a_j\in A}\lambda_{a_j} \\
= & \max_{a_i\in A}\trans{a_i}{w}
\end{align}
\end{proof}

 \begin{lemma}
 \[
 \forall w\in\R^K, \trans{b}{w} \le \max_{i\in N} \trans{a_i}{w} \Leftrightarrow \exists \lambda\in\simplex A \st b = \lambdaT A.
 \]
 \label{lemma:tech2}
 \end{lemma}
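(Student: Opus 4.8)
The plan is to prove the two directions separately: the reverse direction ($\Leftarrow$) will be an immediate consequence of Lemma~\ref{lemma:tech1}, while the forward direction ($\Rightarrow$) will rest on a strict separating-hyperplane argument applied to the convex hull of the rows of $A$.

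For the reverse direction, suppose $b = \lambdaT A$ for some $\lambda\in\simplex A$. Then I would simply note that for \emph{every} fixed $w\in\R^K$ the hypothesis of the second implication of Lemma~\ref{lemma:tech1} is satisfied, so $\trans{b}{w}\le\max_{i\in N}\trans{a_i}{w}$; since $w$ is arbitrary, the left-hand side of the stated equivalence holds. (Equivalently, one can just expand $\trans{b}{w}=\sum_{i}\lambda_i\trans{a_i}{w}\le\max_{i\in N}\trans{a_i}{w}$ directly, as $\lambda$ is a distribution.)

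For the forward direction I would argue the contrapositive. Assume no $\lambda\in\simplex A$ satisfies $b=\lambdaT A$, i.e., $b$ does not lie in the convex hull $C$ of the finitely many row vectors $\{a_i\}$. Since the index set is finite, $C$ is a compact convex subset of $\R^K$ and $b\notin C$, so the strict separating hyperplane theorem yields a direction $w\in\R^K$ and a scalar $c$ with $\trans{a_i}{w}\le c<\trans{b}{w}$ for all $i$. Hence $\max_{i\in N}\trans{a_i}{w}<\trans{b}{w}$ for this particular $w$, which negates the universally quantified left-hand side. Taking the contrapositive completes the proof.

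I do not expect a real obstacle here; the entire content is invoking strict separation for a compact convex set. The one point worth flagging is the role of the quantifier ``$\forall w$'': in contrast to Lemma~\ref{lemma:tech1}, which fixes $w$, the equivalence in Lemma~\ref{lemma:tech2} holds precisely because $w$ ranges over all of $\R^K$ — a single separating functional then certifies $b\notin C$, and conversely membership in $C$ forces domination simultaneously for every $w$.
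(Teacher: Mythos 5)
Your proof is correct, but your forward direction takes a genuinely different route from the paper. The paper's proof first applies Lemma~\ref{lemma:tech1} to rewrite the universally quantified condition, then encodes it as the statement that the linear program $\max_{w,t}\,\trans{b}{w}-t$ subject to $t\ge\trans{a_i}{w}$ for all $a_i$ has optimal value $0$, and invokes strong LP duality: the dual is exactly the feasibility problem $b=\lambdaT A$, $\lambda\in\simplex{A}$. You instead argue by contraposition with the strict separating hyperplane theorem: if $b$ lies outside the (compact, since finitely generated) convex hull of the rows of $A$, a separating direction $w$ gives $\max_{i}\trans{a_i}{w}<\trans{b}{w}$, violating the left-hand side. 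The two arguments are essentially dual faces of the same fact (Farkas-type duality versus convex separation); yours is more elementary and geometric, needs neither Lemma~\ref{lemma:tech1} nor LP machinery for this direction, and makes transparent why the quantifier over all $w$ is what certifies membership in the hull. The paper's LP formulation, on the other hand, matches the style of the rest of the appendix (explicit primal/dual pairs) and adapts mechanically to variants such as restricting $w\ge 0$, where the dual constraint relaxes to $b\le\lambdaT A$. Your easy direction ($\Leftarrow$) is the same convex-combination bound used in Lemma~\ref{lemma:tech1}, so no issue there.
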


\begin{proof}[Proof of Lemma~\ref{lemma:tech2}]
\begin{align}
& \; \forall w\in\R^K, \trans{b}{w} \le \max_{a_i\in A} \trans{a_i}{w} \\
\Leftrightarrow & \; \forall w\in\R^K, \exists \lambda\in\simplex A \st \trans{b}{w} \le \lambdaT Aw \\
\Leftrightarrow & \; \forall w\in\R^K, \exists \lambda\in\simplex A \st \trans{\left[b - \lambdaT A\right]}w \le 0
\end{align}
$\Leftrightarrow$ the following linear program has optimal value $0$
\begin{align}
\max_{w,t} & \; \trans{b}{w} - t \\
\notag\mbox{subject to:} & \; t \ge \trans{a_i}{w}, \forall a_i\in A.
\end{align}
The following linear feasibility problem is the dual of the above program
\begin{align}
\min_{\lambda} & \; 0 \\
\notag\mbox{subject to:} & \; b = \lambdaT A \\
\notag & \; \lambda\in\simplex A.
\end{align}
By strong duality for linear programming, the primal has value $0$ iff the dual is feasible, which is exactly when $\exists \lambda\in\simplex A \st b = \lambdaT A$.
\end{proof}

\begin{proof}[Proof of Theorem~\ref{thm:iff}]
\begin{align}
& \; \forall w\in\R^K, \; \Regret{\Phi}{\predsigma}{w} \le \Regret{\Phi}{\demonsigma}{w} \\
\Leftrightarrow & \; \forall w\in\R^K, \max_{f_i\in\Phi} \trans{\predsigma}{\Rmat{i}{f_i}w} \le \max_{f_i\in\Phi} \trans{\demonsigma}{\Rmat{i}{f_i}w} \\
\Leftrightarrow & \; \forall f_i\in\Phi, \forall w\in\R^K, \trans{\predsigma}{\Rmat{i}{f_i}w} \le \max_{f_j\in\Phi} \trans{\demonsigma}{\Rmat{j}{f_j}w} \\
\Leftrightarrow & \; \forall f_i\in\Phi, \exists \eta^{f_i}\in\simplex\Phi \st \trans{\predsigma}{\Rmat{i}{f_i}} = \sum_{f_j\in\Phi}\eta^{f_i}_{f_j} \trans{\demonsigma}{\Rmat{j}{f_j}}
\end{align}
The last step makes use of our second technical lemma.
\end{proof}

\subsection*{Derivation of the Dual Program}
The Lagrange dual is
\begin{align}
\min_{\alpha,\beta,\gamma,\delta,u,v,\xi}\max_{\predsigma,\eta,\nu} & \; -\sum_{a\in\bar\A}\predsigma_a\log\predsigma_a - C\nu - \sum_{f_i\in\bar\Phi}\left(\predsigma\baRmat{i}{f_i} - \sum_{f_j\in\Phi}\eta^{f_i}_{f_j}\trans{\demonsigma}\Rmat{j}{f_j} - \nu\right)\alpha^{f_i} \\
& \; - \sum_{f_i\in\bar\Phi}\left(\sum_{f_j\in\Phi}\eta^{f_i}_{f_j}\trans{\demonsigma}\Rmat{j}{f_j} - \predsigma\baRmat{i}{f_i} - \nu\right)\beta^{f_i}\\
& \; + \sum_{f_i\in\bar\Phi}\left(1 - \sum_{f_j\in\Phi}\eta^{f_i}_{f_j}\right)\gamma^{f_i} + \left(1 - \sum_{a\in\bar\A}\predsigma_a\right)\delta \\
& \; + \sum_{f_i\in\bar\Phi}\sum_{f_j\in\Phi}\eta^{f_i}_{f_j}u^{f_i}_{f_j} + \sum_{a\in\A}\predsigma_av_a  + \nu\xi\\
\mbox{subject to:} & \; \alpha,\beta,u,v,\xi \ge 0
\end{align}
To solve the unconstrained inner optimization, we take derivatives w.r.t.\ $\sigma$, $\eta$ and $\nu$ and set equal to $0$:
\begin{align}
& \log\predsigma_a = -1 - \sum_{f_i\in\bar\Phi}\barvec{i}{f_i}{a}(\alpha^{f_i} - \beta^{f_i}) - \delta + v_a = 0,\\
& \trans{\demonsigma}\Rmat{j}{f_j}(\alpha^{f_i}-\beta^{f_i}) - \gamma^{f_i} + u^{f_i}_{f_j} = 0, \quad\forall f_i,\in\bar\Phi, f_j\in\Phi, \mbox{~and} \\
& \xi - C + \sum_{f_i\in\bar\Phi}\alpha^{f_i} + \beta^{f_i} = 0.
\end{align}
Substituting into the Lagrangian, we get
\begin{align}
\min_{\alpha,\beta,\gamma,\delta,u,v,\xi} & \; \sum_{f_i\in\bar\Phi}\gamma^{f_i} + \delta + \exp(-1-\delta)\sum_{a\in\bar\A}\exp\left(-\sum_{f_i\in\bar\Phi}\barvec{i}{f_i}{a}(\alpha^{f_i}-\beta^{f_i}) + v_a\right) \\
\mbox{subject to:} & \; \trans{\demonsigma}\Rmat{j}{f_j}(\alpha^{f_i} - \beta^{f_i}) - \gamma^{f_i} + u^{f_i}_{f_j} = 0, \quad\forall f_i\in\bar\Phi,f_j\in\Phi\\
& \; \xi + \sum_{f_i\in\bar\Phi}\alpha^{f_i} + \beta^{f_i} = C,\\
& \; \alpha, \beta, u, v, \xi\ge 0.
\end{align}
We note that $u$ are slack variables, and that, by inspection, $v = 0$ at optimality.  Thus, an equivalent program is
\begin{align}
\min_{\alpha,\beta,\gamma,\delta,\xi} & \; \sum_{f_i\in\bar\Phi}\gamma^{f_i} + \delta + \exp(-1-\delta)\sum_{a\in\bar\A}\exp\left(-\sum_{f_i\in\bar\Phi}\barvec{i}{f_i}{a}(\alpha^{f_i}-\beta^{f_i})\right) \\
\mbox{subject to:} & \; \trans{\demonsigma}\Rmat{j}{f_j}\lambda^{f_i} \le \gamma^{f_i}, \quad\forall f_i\in\bar\Phi,f_j\in\Phi\\
& \; \xi + \sum_{f_i\in\bar\Phi}\alpha^{f_i} + \beta^{f_i} \le C,\\
& \; \alpha, \beta, \xi\ge 0.
\end{align}
We eliminate $\delta$ by setting its partial derivative to $0$, solving for $\delta$
\begin{align}
\delta = \log\left(\sum_{a\in\bar\A}\exp\left(-\sum_{f_i\in\bar\Phi}\barvec{i}{f_i}{a}(\alpha^{f_i}-\beta^{f_i})\right)\right) - 1
\end{align}
and substituting back into the objective
\begin{align}
\min_{\alpha,\beta,\gamma,\xi} & \; \sum_{f_i\in\bar\Phi}\gamma^{f_i} + \log\left(\sum_{a\in\bar\A}\exp\left(-\sum_{f_i\in\bar\Phi}\barvec{i}{f_i}{a}(\alpha^{f_i}-\beta^{f_i})\right)\right) - 1 \\
\mbox{subject to:} & \; \trans{\demonsigma}\Rmat{j}{f_j}(\alpha^{f_i}-\beta^{f_i}) \le \gamma^{f_i}, \quad\forall f_i\in\bar\Phi,f_j\in\Phi\\
& \; \xi + \sum_{f_i\in\bar\Phi}\alpha^{f_i} + \beta^{f_i} \le C,\\
& \; \alpha, \beta, \xi\ge 0.
\end{align}
By inspection, at optimality, $\gamma^{f_i} = \max_{f_j\in\Phi}\trans{\demonsigma}\Rmat{j}{f_j}(\alpha^{f_i}-\beta^{f_i})$.  Thus an equivalent program is
\begin{align}
\min_{\alpha,\beta,\xi} & \; \sum_{f_i\in\bar\Phi}\left[\max_{f_j\in\Phi}\trans{\demonsigma}\Rmat{j}{f_j}(\alpha^{f_i}-\beta^{f_i})\right] + \log\left(\sum_{a\in\bar\A}\exp\left(-\sum_{f_i\in\bar\Phi}\barvec{i}{f_i}{a}(\alpha^{f_i}-\beta^{f_i})\right)\right) - 1 \\
& \; \xi + \sum_{f_i\in\bar\Phi}\alpha^{f_i} + \beta^{f_i} = C,\\
& \; \alpha, \beta, \xi\ge 0.
\end{align}

\begin{proof}[Proof of Lemma~\ref{lemma:primsol}]
In the derivation of the dual program, we observed that at optimality
\begin{equation}
\log\predsigma_a = -1 - \sum_{f_i\in\bar\Phi}\barvec{i}{f_i}{a}(\alpha^{f_i}-\beta^{f_i}) - \delta + v_a = 0.
\end{equation}
Noting $v = 0$ and substituting for the optimal $\delta$, we get
\begin{equation}
\log\predsigma_a = -\sum_{f_i\in\bar\Phi}\barvec{i}{f_i}{a}(\alpha^{f_i} - \beta^{f_i}) - \log\left(\sum_{a'\in\bar\A}\exp\left(-\sum_{f_j\in\bar\Phi}\barvec{j}{f_j}{a'}(\alpha^{f_j}-\beta^{f_j})\right)\right).
\end{equation}
All that remains is to exponentiate both sides.
\end{proof}

\subsection*{Sample Complexity}
\begin{proof}[Proof of Theorem~\ref{thm:sample}]
\begin{align}
P\left( \max_{f_i\in\Phi,k\in K} |\demonsigma\Rmat{i}{f_i} - \truesigma\Rmat{i}{f_i}|_k \ge \epsilon\Delta M \right) & \le
P\left( \bigcup_{f_i\in\Phi,k\in K} |\demonsigma\Rmat{i}{f_i} - \truesigma\Rmat{i}{f_i}|_k \ge \epsilon\Delta M \right) \\
& \le \sum_{f_i\in\Phi,k\in  K} P\left( |\demonsigma\Rmat{i}{f_i} - \truesigma\Rmat{i}{f_i}|_k \ge \epsilon\Delta M \right) \\
& \le \sum_{f_i\in\Phi,k\in  K} 2\exp\left(\frac{-\epsilon^2M}{2} \right) \\
& = 2|\Phi|K\exp\left(\frac{-\epsilon^2M}{2} \right) \\
& \le \delta
\end{align}
We use the union bound in step 2, and Hoeffding's inequality in step 3.  Solving for $M$, we get our result
\begin{align}
M \ge \frac{2}{\epsilon^2}\log\frac{2|\Phi|K}{\delta}.
\end{align}
\end{proof}

\begin{proof}[Proof of Corollary~\ref{corollary:regretsclose}]
We have $\forall w, \Regret{\Phi}{\predsigma}{w} \le \Regret{\Phi}{\demonsigma}{w} + \nu\norm{w}_1$, where $\nu$ depends on the choice of the slack's penalty.  Thus, we have $\Regret{\Phi}{\predsigma}{w^*} \le \Regret{\Phi}{\demonsigma}{w^*} + \nu\norm{w}_1 \le \Regret{\Phi}{\truesigma}{w^*} + (\epsilon\Delta + \nu)\norm{w^*}_1$ with probability at least $1-\delta$, so long as $M$ is as large as Theorem~\ref{thm:sample} deems.  We can make $\nu$ as small as we like by increasing the slack penalty.
\end{proof}

\end{document}